\newcommand*{\img}[1]{%
    \raisebox{-.2\baselineskip}{%
        \includegraphics[
        height=\baselineskip,
        width=\baselineskip,
        keepaspectratio,
        ]{#1}%
    }%
}
\definecolor{FGreen}{RGB}{6,253,105}
\newtheorem{definition}{Definition}
\newtheorem{theorem}{Theorem}
\newtheorem{lemma}{Lemma}
\DeclareMathAlphabet{\mathcal}{OMS}{cmsy}{m}{n}
\DeclarePairedDelimiter{\norm}{\lVert}{\rVert}
\DeclareMathOperator*{\argminB}{argmin}   
\newcommand{\zono}[1]{\langle #1 \rangle}
\newcommand{\trace}{\text{Tr}}
\begin{document}

\title{\vspace{.85in}Differentially Private Set-Based Estimation\\ Using Zonotopes}

\author{Mohammed M. Dawoud$^{1}$, Changxin Liu$^{2}$, Amr Alanwar$^{1}$, and Karl H. Johansson$^{2}$
\thanks{This paper has received funding from the European Union's Horizon 2020 research and innovation program under grant agreement No. 830927.}
\thanks{$^{1}$School of Computer Science and Engineering, Constructor University, Bremen, Germany (Email: \{mdawoud, aalanwar\}@constructor.university)}
\thanks{$^{2}$Division of Decision and Control Systems, School of Electrical Engineering and Computer Science, KTH Royal Institute of Technology, and Digital Futures, Stockholm, Sweden  (Email: \{changxin, kallej\}@kth.se)}
}

\vspace{1in}
\maketitle
\begin{abstract}

For large-scale cyber-physical systems, the collaboration of spatially distributed sensors is often needed to perform the state estimation process. Privacy concerns naturally arise from disclosing sensitive measurement signals to a cloud estimator that predicts the system state. To solve this issue, we propose a differentially private set-based estimation protocol that preserves the privacy of the measurement signals. 
Compared to existing research, our approach achieves less
privacy loss and utility loss using a numerically optimized truncated noise distribution.
The proposed estimator is perturbed by weaker noise than the analytical approaches in the literature to guarantee the same level of privacy, therefore improving the estimation utility. Numerical and comparison experiments with truncated Laplace noise are presented to support our approach.
Zonotopes, a less conservative form of set representation, are used to represent estimation sets, giving set operations a computational advantage. The privacy-preserving noise anonymizes the centers of these estimated zonotopes, concealing the precise positions of the estimated zonotopes. 

\end{abstract}

\begin{IEEEkeywords}
set-based estimation, differential privacy, truncated noise distribution, zonotopes
\end{IEEEkeywords}

\section{Introduction} \label{sec:intro}
The widespread use of automated and intelligent systems and the Internet of Things (IoT) in our daily lives has become an apparent fact. One can see it in smart homes, smart devices, Intelligent Transportation systems (ITS), and intelligent weather forecasting systems. These large-scale systems often need their participating entities to share some data, e.g., vehicle location or pedestrian destination, medical measurements, financial information, and users' habits \cite{8686209}. Sharing these data within the system can cause privacy vulnerabilities, e.g., individuals' positions may be inferred from location data shared with routing applications such as Waze and Google Maps \cite{10.1145/2508859.2516735,6256763,proloc}. 
Another example of a privacy vulnerability is identity disclosure for a driver when he shares his real identity in vehicular ad-hoc networks (VANETs) of ITS. Thus, there is a pressing need for privacy-preserving mechanisms for data publishing within such systems. Motivated by this, this work is dedicated to privacy-preserving state estimation.

\subsection{Set-Based Estimation}

Estimation is a ubiquitous phenomenon that plays a crucial role in decision-making across various domains.
From an estimated value-based perspective, it can be categorized into point-wise and set-based estimation. The former uses measurements to calculate a point-valued estimate, such as the location of an autonomous vehicle and the distance between space objects \cite{proloc}. Alternatively, set-based estimation uses the measurements and the system model, if available, to predict a set that encloses the system's state. Supported by the development of sensors' capabilities, the widespread of autonomous systems, which massively deploy set-based estimation, can be found in applications including robots and autonomous vehicles \cite{doi:10.1177/1729881419839596,s21062085,vandana}.
 
 There are three approaches to set-based estimation: the model-based approach, the data-driven approach, and the hybrid approach. The model-based approach depends on the system model and the sensors' measurements to estimate the system's states \cite{s21062085}. Some of the filters used in the model-based approach can estimate states for systems with a linear model and a Gaussian noise distribution, such as the Kalman filter \cite{7549014}. 
 On the other hand, as systems become more automated, the number of sensors in the system and the amount of generated data per sensor increase. A system model for such complex systems may be prohibitively expensive \cite{doi:10.1177/0142331219879858}. Therefore, a data-driven approach might be more convenient for such complex systems \cite{https://doi.org/10.1002/qj.3551,DBLP:journals/corr/abs-1806-03753}. The hybrid estimation approach combines the advantages of both methods. It leads to fast state calculation as in the model-based approach and handles system complexity as in the data-driven approach \cite{JIN2020105962}. This paper considers the model-based approach.

The set-based estimation uses, in general, different set representations, such as intervals, polytopes, ellipsoids, zonotopes, and constrained zonotopes \cite{ALTHOFF2010233, Kurzhanskiy200726, ALAMO20051035, SCOTT2016126}. This paper considers zonotopes for set representation as they are computationally efficient and are closed under multiple set operations \cite{9838494}.

\subsection{State Estimation Under Privacy Constraints}
Existing private estimation strategies can be categorized into encryption-based methods and non-encryption-based methods. For the former, the authors in \cite{kim2019encrypted} developed a private Luenberger observer based on additively homomorphic encryption, where the state matrix only consists of integers such that the observer operates for an infinite time horizon. A partially homomorphic encryption scheme was used in \cite{proloc,zhang2020secure,emad2022privacy} to develop a state estimator. Most recent works in the latter are based on differential privacy (DP), a powerful non-cryptographic technique in quantifying individual privacy preservation. Roughly, DP refers to a property of a randomized algorithm; its output remains stable for any changes that occur to an individual in the database, therefore protecting against privacy-related attacks on the individual's information, but the accuracy of this output is negatively affected by the randomization of the DP algorithms. In \cite{6606817}, a general framework was developed for a differentially private filter. The authors in \cite{degue2017differentially} proposed a differentially private Kalman filter. The methodology has also been generalized to nonlinear systems in \cite{https://doi.org/10.1002/rnc.439}. Note that the above observers are point-wise ones, and disturbances or uncertain parameters, which are only known to be bounded, are present for some systems. Hence, set-based estimators and interval estimators can handle state estimation for such systems.
Recently, the authors in \cite{9147726} developed a differentially private interval observer with bounded input perturbation.

The main contributions of this article can be summarized as follows:
 \begin{itemize}

\item We develop an $(\epsilon,\delta)$-Approximate Differentially Private ($(\epsilon,\delta)$-ADP) set-based estimator based on the truncated additive mechanism with numerically optimized noise distribution \cite{DBLP:journals/corr/abs-2107-12957} and zonotopes for set representation. The proposed estimator preserves the privacy of sensor measurements during the estimation process with minimal utility loss. The precise positions of the estimated zonotopes are concealed. Also, the computation of the estimated zonotopes' centers and the private measurements are decoupled by nature from the computation of the generators.  
\item We evaluate the proposed $(\epsilon,\delta)$-ADP set-based estimator by comparing the deployment of a truncated optimal noise distribution with a truncated Laplace distribution in \cite{9147726}, and show that the latter leads to a more substantial noise than the former to achieve the same privacy level. Consequently, 
the proposed $(\epsilon,\delta)$-ADP set-based estimator achieves improved utility, represented as the average estimation error, over the existing method. All used data and code are publicly available \footnote {\label{footnote:DP-estimator}\url{https://github.com/mohammed-dawoud/Differentially-Private-Set-Based-Estimation-Using-Zonotopes}}.

\end{itemize}

The rest of the paper is organized as follows: The problem statement and preliminaries are presented in Section \ref{sec:sysmodel}. The algorithms are designed and evaluated in Section \ref{sec:main} and Section \ref{sec:eval}, respectively. Finally, we conclude the paper in Section \ref{sec:conc}. 

{\textbf{Notation.} Throughout this paper, we denote the vectors and scalars by lower case letters, the matrices by upper case letters, the set of real numbers by $\mathbb{R}$, the set of integers by $\mathbb{Z}$, and the set of positive integers by $\mathbb{Z}^{+}$. For a given matrix $M\in \mathbb{R}^{m\times n}$, its Frobenius norm is given by $\lVert M \rVert_{F} = \sqrt{\trace{(M^T M)}}$. We denote the $i$-th element of a vector or list $a$ by $a^{(i)}$. For a vector $a\in \mathbb{R}^n$, we denote its $L_2$ norm by $| a |_2 = \sqrt{\sum^n_{i=1}(a^{(i)})^2}$.
 We use $\{y_k\}$ to denote a list of the measurement signals vectors $y_k$ at different time steps $k$. We denote the $L_2$ norm of a vector-valued signal $y:\mathbb{Z}^{+} \rightarrow \mathbb{R}^n$, by $\lVert y \rVert_2 = \sqrt{\sum^{\infty}_{k=1}(|y_k|_2)^2}$.

\section{Preliminaries and Problem Statement}
\label{sec:sysmodel}
In this section, we present the needed background and the problem setup.
\subsection{Set Representation and Set-based Estimation}
Now, we are going to introduce {the set representation and operations,} and the set-based estimation approach.
\subsubsection{{Set Representation and Operations}}
The zonotope is defined as follows:
\begin{definition} [\textbf{Zonotope} \cite{10.1007/978-3-540-31954-2_19}]
\label{def:zonotope} 
Given a center $c_{\mathcal{Z}} \in \mathbb{R}^n$ and $\gamma_{\mathcal{Z}} \in \mathbb{N}$ generator vectors in a generator matrix $G_{\mathcal{Z}}=\begin{bmatrix} g_{\mathcal{Z}}^{(1)}& \dots &g_{\mathcal{Z}}^{(\gamma_{\mathcal{Z}})}\end{bmatrix} \in \mathbb{R}^{n \times \gamma_{\mathcal{Z}}}$, a zonotope is defined as
\begin{align}\label{eqn:zonotope}
	\mathcal{Z} = \Big\{ x \in \mathbb{R}^n \; \Big| \; x = c_{\mathcal{Z}} + \sum_{i=1}^{\gamma_{\mathcal{Z}}} \beta^{(i)} \, g^{(i)}_{\mathcal{Z}} \, ,
	\norm{\beta^{(i)}}_\infty \leq 1  \Big\}.
\end{align}
We use the shorthand notation $\mathcal{Z} = \zono{c_{\mathcal{Z}},G_{\mathcal{Z}}}$ for zonotopes. 
\end{definition}
It is worth noting that zonotopes are closed under the linear map and Minkowski sum \cite{conf:althoffthesis}. \label{item:linear-map}
The linear map $L \in \mathbb{R}^{m  \times n}$ for zonotope ${\mathcal{Z}}$ is defined and computed as follows:
\begin{align}
L \mathcal{Z} = \{Lz | z\in\mathcal{Z}\}  = \zono{L c_{\mathcal{Z}}, L G_{\mathcal{Z}} }. \label{eqn:linmap}
\end{align}

\label{item:minkowski-sum}
Given two zonotopes $\mathcal{Z}_1=\langle c_{\mathcal{Z}_1},G_{\mathcal{Z}_1} \rangle$ and $\mathcal{Z}_2=\langle c_{\mathcal{Z}_2},G_{\mathcal{Z}_2} \rangle$, the Minkowski sum is defined and computed as 
\begin{align}\label{eqn:minkowski-sum}
     \mathcal{Z}_1 \oplus \mathcal{Z}_2 &= \{ z_1 + z_2 | z_1 \in \mathcal{Z}_1, z_2 \in \mathcal{Z}_2 \}\nonumber\\&= \Big\langle c_{\mathcal{Z}_1} + c_{\mathcal{Z}_2}, [G_{\mathcal{Z}_1}, G_{\mathcal{Z}_2}]\Big\rangle.
\end{align}

We define and compute the Cartesian product of two zonotopes $\mathcal{Z}_1 $ and $\mathcal{Z}_2$ by 
\begin{align}\label{eqn:cartesian-product}
\mathcal{Z}_1 \times \mathcal{Z}_2 &= \bigg\{ \begin{bmatrix}z_1 \\ z_2\end{bmatrix} \bigg| z_1 \in \mathcal{Z}_1, z_2 \in \mathcal{Z}_2 \bigg\} \nonumber\\
&= \Big\langle \begin{bmatrix} c_{\mathcal{Z}_1} \\ c_{\mathcal{Z}_2} \end{bmatrix}, \begin{bmatrix} G_{\mathcal{Z}_1} & 0 \\ 0 & G_{\mathcal{Z}_2}\end{bmatrix} \Big\rangle.
\end{align}
\subsubsection{{Set-Based Estimation}}
\label{subsubsection:set-based-estimation} 
Consider a linear discrete-time system with bounded noise given by\begin{equation}\label{eqn:system-model}
\begin{aligned}
    x_{k+1} &= F x_k + w_k,\\
    y_{k}^{(i)} &= H^{(i)} x_k + v_{k}^{(i)},
\end{aligned}
\end{equation}
where $x_k \in \mathbb{R}^n$ is the system state at time step $k \in \mathbb{Z}^+$, and $y_{k}^{(i)}\in \mathbb{R}$ is the measurement signal of single sensory entity $i$ such that $i \in \{1,\dots,\;m\}$ with $m$ equals the number of available sensors. {The matrices }$F\in \mathbb{R}^{n\times n}$ and $H^{(i)} \in \mathbb{R}^{1\times n}$ are state and measurement matrices, respectively. The vector \begin{math}w_k\end{math} and the value \begin{math} v_{k}^{(i)}\end{math} are the process and the measurement noise, respectively. They are assumed to be unknown but bounded by the zonotopes $\mathcal{Z}_w$, and  $\mathcal{Z}^{(i)}_{v}$, respectively.
The system has a bounded initial state $x_0\in \mathcal{Z}_0= \zono{c_0,G_0}$, and a predicted state set $\mathcal{\hat{S}}_{k}$. At each time step $k$, the set-based state estimator aims to find the corrected state set $\mathcal{\Bar{S}}_{k}$ by finding the intersection between the predicted state set $\mathcal{\hat{S}}_{k}$ and the measurement sets corresponding to the list of all sensory entities' measurement signals $y^{(i)}_k$, $i=1,\dots,m$ \cite{9838494,DBLP:journals/corr/abs-2010-11097,Wang2021,VALERO202011277}. The set-based state estimation is achieved through the following steps:
\\\textbf{The Prediction Step.} Consider an initial state $x_0 \in \mathcal{Z}_0$ and a process noise $w_k \in \mathcal{Z}_w$. The set-based state estimator determines the predicted state set $\mathcal{\hat{S}}_{k}$ which is a set of all reachable states, formulated by
\begin{align}\label{eqn:predicted-state-set}
	\mathcal{\hat{S}}_{k}  = F \mathcal{\Bar{S}}_{k-1}\oplus \mathcal{Z}_w.
\end{align}
\textbf{The Correction Step}. The corrected state set $\mathcal{\Bar{S}}_{k}$ is determined using the predicted state set $\mathcal{\hat{S}}_{k}$ \eqref{eqn:predicted-state-set}, the measurement sets corresponding to $\{y^{(i)}_k \}_{k\in\mathbb{Z^+}}$, and some weights $\lambda^{(i)}_{k}$, $i \in \{1,\dots,m\}$ according to the following lemma.
\begin{lemma}[\cite{9838494}] \label{lemma-set-based-estimation-zonotope} Let $x_0 \in \mathcal{Z}_0{=}\zono{c_0,G_0}$ and $v^{(i)}_{k} \in \mathcal{Z}^{(i)}_{v}{=}\zono{c^{(i)}_{v},G^{(i)}_{v}}$. Then, the corrected state set $\mathcal{\Bar{S}}_{k}$ is directly computed as a zonotope $\mathcal{\Bar{Z}}_k=\zono{\Bar{c}_k, \Bar{G}_k}$ over-approximating the intersection between the zonotope $\mathcal{\hat{Z}}_{k-1}=\zono{\hat{c}_{k-1}, \hat{G}_{k-1}}$ representing the predicted state set $\mathcal{\hat{S}}_{k-1}$ and the $m$ regions for $x_k$ states corresponding to $y^{(i)}_{k}$, where
\begin{align}
	\Bar{c}_k &=  \hat{c}_{k-1} + \sum_{i=1}^{m} \lambda^{(i)}_{k} \big ( y^{(i)}_{k} - H^{(i)} \hat{c}_{k-1} - c^{(i)}_{v} \big), \label{eqn:estimation-with-zonotope-center}\\
    \Bar{G}_k &=\big[(I- \sum_{i=1}^{m} \lambda^{(i)}_{k} H^{(i)})\hat{G}_{k-1}, \; \lambda^{(1)}_{k}G^{(1)}_{v}, \dots, \lambda^{(m)}_{k}G^{(m)}_{v} \big],\label{eqn:estimation-with-zonotope-generator}
\end{align}
where $\lambda^{(i)}_{k}\in \mathbb{R}^{n\times 1}$ are weights such that $ i \in \{1,\dots,\; m\}$ and $k\in \mathbb{Z}^+$.
\end{lemma}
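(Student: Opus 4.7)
The goal is to show that every state $x_k$ which is simultaneously in the predicted zonotope $\hat{\mathcal{Z}}_{k-1}$ and consistent with all measurements $y_k^{(i)}$ can be written in the zonotope form $\langle \bar c_k,\bar G_k\rangle$ defined by \eqref{eqn:estimation-with-zonotope-center}--\eqref{eqn:estimation-with-zonotope-generator}. My plan is to derive an explicit affine expression for $x_k$ in terms of the underlying zonotope coefficients, and then read off the center and generators.

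The key observation is the following identity, valid for arbitrary weights $\lambda_k^{(i)}\in\mathbb{R}^{n\times1}$: if $x_k$ is consistent with measurement $i$, then $y_k^{(i)} - H^{(i)} x_k - v_k^{(i)} = 0$, so
\begin{equation*}
x_k \;=\; x_k + \sum_{i=1}^{m}\lambda_k^{(i)}\bigl(y_k^{(i)} - H^{(i)} x_k - v_k^{(i)}\bigr).
\end{equation*}
I would then substitute the zonotope parameterizations $x_k = \hat c_{k-1} + \hat G_{k-1}\beta$ with $\|\beta\|_\infty \le 1$ and $v_k^{(i)} = c_v^{(i)} + G_v^{(i)}\beta_v^{(i)}$ with $\|\beta_v^{(i)}\|_\infty \le 1$ into the right-hand side, and group the terms.

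After grouping, the constant part becomes $\hat c_{k-1} + \sum_{i}\lambda_k^{(i)}(y_k^{(i)} - H^{(i)}\hat c_{k-1} - c_v^{(i)})$, which is exactly $\bar c_k$. The part linear in $\beta$ becomes $\bigl(I - \sum_i \lambda_k^{(i)} H^{(i)}\bigr)\hat G_{k-1}\beta$, and each $\beta_v^{(i)}$ contributes $-\lambda_k^{(i)} G_v^{(i)}\beta_v^{(i)}$; since zonotope generators are symmetric about the center (signs of $\beta$-coefficients are immaterial), concatenating these columns yields precisely $\bar G_k$ as in \eqref{eqn:estimation-with-zonotope-generator}. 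This shows every admissible $x_k$ lies in $\bar{\mathcal{Z}}_k$, so $\bar{\mathcal{Z}}_k$ contains the true corrected set.

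The step I expect to need the most care is the over-approximation argument: letting $\beta$ and the $\beta_v^{(i)}$ range independently over $[-1,1]$ is what makes the resulting set a zonotope, but this forgets the coupling imposed by the consistency constraints $H^{(i)}x_k + v_k^{(i)} = y_k^{(i)}$ between $\beta$ and each $\beta_v^{(i)}$. Thus the construction cannot be exact in general (the true intersection of a zonotope with strips is not itself a zonotope), and the resulting zonotope is an outer bound whose tightness depends on the choice of the gains $\lambda_k^{(i)}$. I would close the argument by remarking that the identity used at the start holds for \emph{any} $\lambda_k^{(i)}$, so these gains can subsequently be designed (e.g., by minimizing a size measure such as the Frobenius norm of $\bar G_k$) without affecting correctness of the inclusion.
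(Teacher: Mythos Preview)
Your argument is correct and is the standard derivation of this strip-intersection over-approximation for zonotopes. Note, however, that the paper does not supply its own proof of this lemma: it is stated with a citation to~\cite{9838494} and used as a black box, so there is no in-paper proof to compare against. Your write-up would serve perfectly well as the missing justification, and your remark about the over-approximation arising from decoupling the $\beta$ and $\beta_v^{(i)}$ coefficients is exactly the right way to explain why the inclusion is generally strict and why the choice of $\lambda_k^{(i)}$ is left free for subsequent optimization.
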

 As in \cite{conf:disdiff}, the optimal weights $\Bar{\Lambda}_k= [ {\lambda}^{(1)}_{k},\dots,\; {\lambda}^{(m)}_{k}]$ are calculated to reduce the Frobenius norm of the generator matrix $ \Bar{G}_k$ as follows: 
\begin{align}\label{eqn:optimal-lambda-weights}
\Bar{\Lambda}_k = \argminB_{\lambda_k} \, \lVert \Bar{G}_k \rVert_{F}^{2}.
\end{align}
The prediction update aims to obtain $\mathcal{\hat{Z}}_{k}=\zono{\hat{c}_{k}, \hat{G}_{k}}$, which is computed according to the system model defined in \eqref{eqn:system-model} by
\begin{align}\label{eqn:time-update}
   \mathcal{\Acute{S}}_{k}=F \mathcal{\Bar{S}}_{k}\oplus \mathcal{Z}_w,
\end{align}
where $\mathcal{\Acute{S}}_{k}=\mathcal{\Acute{Z}}_{k}=\zono{\Acute{c}_{k}, \Acute{G}_{k}}$ and we have the reduced zonotope $\mathcal{\hat{Z}}_{k}=\zono{\hat{c}_{k}, \hat{G}_{k}}$ with $\Hat{c}_k=\Acute{c}_k$, then the order of generator matrix $\Acute{G}_k$ of $\mathcal{\Acute{Z}}_{k}$ is reduced according to \cite{10.1007/978-3-540-31954-2_19}:  
\begin{align}\label{eqn:girard-reduction}
    \hat{G}_{k} =\downarrow_q	\Acute{G}_k,
\end{align}
where $\downarrow_q$ denotes the reduction in the order of the generator matrix.
\subsection{{Differential Privacy}}

{Next, we will present a few notions in differential privacy, which will be used later to develop the differentially private set-based estimator.}
Let $\mathcal{D}$ denote a space of datasets of interest, which are the global measurement signals $y$ released as a sequence of measurement signals vectors $\{y_k\}_{k\in\{1,\dots,\mathcal{T}\}}$ where $\mathcal{T}=\infty$ is also of interest and $y_k=\big[{y_k^{(1)}},\dots,{y_k^{(m)}}\big]^T$ \cite{9147726}.  
We aim to provide a certain level of privacy protection for an individual entity's measurement signal; therefore, the state estimates should be insensitive to its contribution to the global measurement signals $y$.
Two global measurement signals $y$ and $\Acute{y}$ are called adjacent and can be denoted by Adj$(y,\Acute{y})$, if and only if they differ by the value of exactly one entity's measurement signal $y_k^{(i)}, \; i \in \{ 1,\dots,m\}$ \cite{degue2017differentially,9147726,6483414}.
Since the privacy-preserving noise is added to the measurement signals themselves,
then the allowed variation within Adj$(y,\Acute{y})$ is bounded by what is called sensitivity, which is defined formally as follows.

\begin{definition} [\textbf{Sensitivity} \cite{9147726,6483414}]
\label{def:sensitivity}
The allowed deviation for a single sensory entity's measurement  signal between two adjacent global measurement signals $y$ and $\Acute{y}$, i.e., Adj$(y,\Acute{y})$, is bounded in the $L_2$ norm by $s$ and given by 
\begin{align}
\lVert y-\Acute{y}\rVert_2 \leq s,
\end{align}
where $s\geq0$.
\end{definition}
 It deserves noting that $y$ and $\Acute{y}$ are considered different if there is any time step $k$ at which $y_k^{(i)}\neq\Acute{y}_k^{(i)}$.
Given a pair of adjacent global measurement signals, i.e., Adj$(y,\Acute{y})$, a differentially private mechanism aims to prevent an adversary from inferring knowledge about an individual entity's measurement signal by generating randomized outputs with close distributions for adjacent inputs.
{We use a pair of non-negative constants $(\epsilon,\delta)$ to quantify the privacy loss \cite{10.1007_11787006_1,degue2017differentially}. In particular, $e^{\epsilon}$ represents the ratio of probabilities to obtain the same state estimates with adjacent global measurement signals $y$ and $\Acute{y}$, and $\delta$ allows a small degree of violation when bounding the ratio of probabilities.}

\begin{definition} [\textbf{Approximate Differential Privacy (ADP)} \cite{10.1007_11787006_1,degue2017differentially}]
\label{def:approximate-dp} 
A randomized mechanism ${M}$, that maps the measurement signals' space $\mathcal{D}$ equipped with a sensitivity $s$ (Definition \ref{def:sensitivity}) to an anonymized measurement signals' space $\mathcal{O}$, is $(\epsilon,\delta)$-ADP if $\forall$ $\mathcal{H} \in \mathcal{O}$ and $\forall$ Adj$(y,\Acute{y}) \in \mathcal{D}$
\begin{align}\label{eqn:adp-formula}
	Pr\bigl[M(y)\in\mathcal{H}\bigl]\leq e^{\epsilon} Pr\bigl[M(\Acute{y})\in\mathcal{H}\bigl]+\delta, 
\end{align}
such that $\epsilon,\delta \geq 0$. If $\delta=0$, ${M}$ is said to be $\epsilon$-differentially private.
\end{definition}

\subsection{{Problem} Setup}
Consider the following cloud-based state estimation setup \cite{DBLP:journals/corr/abs-2010-11097}:
\begin{itemize}
\item \textbf{Plant.} A system that we aim to estimate its set of possible states, described by the publicly known linear discrete-time model in \eqref{eqn:system-model} with unknown but bounded disturbances.
\item \textbf{Sensors.} An array of $m$ sensory entities that provides a vector of measurement signals $y_k=\big[{y_k^{(1)}},\dots,{y_k^{(m)}}\big]^T$, $k\in \mathbb{Z^+}$. Each sensory entity $i$ produces a private measurement signal $y^{(i)}_k$, $i\in \{1,\dots, m\}$.
\item \textbf{Sensor Manager.} A trusted entity with computational capabilities that enable it to aggregate the measurement signals of all sensory entities and perturb them with the differential privacy noise. 
\item \textbf{Cloud Estimator.} An untrusted  entity that performs set-based estimation for the system state.
\end{itemize}

\textbf{Problem.} {We aim to design a differentially private set-based estimator for the plant with the model in \eqref{eqn:system-model} such that the state estimates are obtained while keeping each sensory entity's measurement signal $(\epsilon,\delta)$-ADP and protecting it from any untrusted entity (e.g., the cloud estimator)}.

\section{Differentially Private Set-Based Estimation}\label{sec:main}

In this section, we develop a differentially private version of the set-based estimator.

In subsection \ref{subsection:Design-of-Truncated-Optimal-Additive}, we use an additive noise mechanism (Definition \ref{def:additive-noise-mechanism}) to perturb the measurement signals by an optimal noise with a numerically generated truncated distribution (Definition \ref{def:optimal-noise}), satisfying Lemma \ref{lemma:dp-optimal-noise-mechanism}, and obtain $(\epsilon,\delta)$-ADP measurement signals. Then, in subsection \ref{subsection:Differentially-Private-Zonotope-based-Set-Membership-Estimation}, Theorem \ref{theorem:dp-estimator}, deploys these $(\epsilon,\delta)$-ADP measurement signals to the set-based estimator to obtain state estimates that are insensitive to a single sensor's measurement signal. The full $(\epsilon,\delta)$-ADP estimator scenario is summarized by Algorithm \ref{alg:adp-set-estimation-using-zonotope}.

\subsection{Design of Truncated Optimal Additive Noise}\label{subsection:Design-of-Truncated-Optimal-Additive}

Additive noise mechanisms such as the Gaussian and Laplace mechanisms typically perturb the measurement signals with particular random noise to achieve DP. 
However, most of the results require the support of the noise distribution to be unbounded, except for a few cases such as \cite{Liu_2019,Croft2022}. This class of mechanisms is not 
applicable for some real-time applications, such as safety-critical systems \cite{DBLP:journals/corr/abs-2107-12957}. One closely related work \cite{9147726} developed an interval observer with DP based on a truncated Laplace mechanism. However, the design relies on an $L_1$ norm-based adjacency relation and an analytical bound for the noise variance, which may cause the design to be conservative, consuming more noise to achieve DP (a numerical comparison is provided in Section \ref{sec:conc}).
To solve this issue, we follow the numerical approach, recently developed in \cite{DBLP:journals/corr/abs-2107-12957}, to optimize a noise distribution that is subjected to a bounded support constraint and the privacy constraint in Definition \ref{def:approximate-dp}.

In our setup, all sets are defined over the continuous domain of all real numbers. However, the truncated optimal noise distribution that we employ to achieve DP is a distribution that is generated numerically and consists of discrete noise occurrence probabilities with noise values selected from the continuous domain. Hence, the resultant domain of adding that noise to the sets is still continuous. 

Next, we define a class of truncated noise distribution functions in \eqref{eqn:optimal-noise-function}. Then, for a fixed DP parameter $\epsilon$ and a particular noise model given in \eqref{eqn:noise-model}, we present an optimization problem, where the objective function 
balances between the privacy loss parameter $\delta$ and the utility loss. 
Upon solving this optimization problem, an optimal noise distribution is generated. 

\begin{definition}[\textbf{Truncated Noise Distribution} \cite{DBLP:journals/corr/abs-2107-12957}]\label{def:optimal-noise}
Let ${\Phi}=[-d,d]$ define a bounded noise range such that $d \in \mathbb{R}$, and $\Acute{\Phi}=\{\phi_l\}_{l\in \{1,\dots,\;2N\}}$ be its discretization on $2N$ equidistant steps such that $\phi_l\in[-d,d]$, $N\in \mathbb{Z}^+$, and $l\in \{1,\dots,\;2N\}$,
then a numerically generated, truncated, discrete, equidistant, symmetric, and monotonically decreasing from its zero center noise distribution function, denoted by $P(\phi_l)$, has the following properties
\begin{subequations}\label{eqn:optimal-noise-function}
\begin{equation}
    \sum_{\phi_l\in \Acute{\Phi}} P(\phi_l)=1\;\; \mathrm{and}\;\; P(\phi_l)\geq 0. \;\;\;\; \;\;(\textrm{Distribution})
\end{equation}    
    \begin{equation}
    P(\phi_l) \geq P(\phi_m)\;\;\; \forall\; \phi_m>\phi_l>0, \;\;\;\;(\textrm{Monotonicity})    
    \end{equation}
\quad \qquad where $l,m\in \{N+1,\dots,\;2N\}$.
    \begin{equation}
    P(\phi_l) = P(-\phi_l).\;\;\;\;\;\;\; \;\;\;\;\;\;\;\;\;\; (\textrm{Symmetry})      
    \end{equation}
    \end{subequations}
\end{definition}

Note that the case with arbitrarily dimensional and spherically rotation-symmetric noise distributions and sensitivity conditions can be reduced to a 1-dimensional privacy analysis \cite{10.1145/2976749.2978318}. Based on this claim, we can define the additive noise for the case in which the global measurement signal $y$ contains an array of $m$ sensors.

\begin{definition}[\textbf{Additive Noise Mechanism}
]\label{def:additive-noise-mechanism}
Given a vector of measurement signals $y_k$ and a noise vector $\phi_k \in \Acute{\Phi}$ of independent and identically distributed (IID) coordinates with the probability distribution in Definition \ref{def:optimal-noise} and satisfying Lemma \ref{lemma:dp-optimal-noise-mechanism}, and the successive samples of $\phi_k$ are also IID, then the additive noise mechanism $M_y$ is defined as
\begin{equation}\label{eqn:additive-noise-mechanism}
    {M}_y:\hat{y}^{DP}_k = y_k + \phi_k, 
\end{equation}
where 
$\hat{y}^{DP}_k$ is a vector of $(\epsilon, \delta)$-ADP measurement signals.
\end{definition}

\begin{lemma}[\textbf{\cite [Theorem 15]{DBLP:journals/corr/abs-2107-12957}}]\label{lemma:dp-optimal-noise-mechanism}
Let $M_{y}$ be an additive noise mechanism
with a sensitivity $s$ (Definition \ref{def:sensitivity}), a vector of measurement signals $y_k$
, and $\Acute{\Phi}=\{\phi_l\}_{l\in \{1,\dots,\;2N\}}$ be the discretization of ${\Phi}$ with the truncated optimal noise distribution $P(\phi_l)$ (Definition \ref{def:optimal-noise}).
If $\forall\;\hat{\Phi}\subseteq \Acute{\Phi}$
\begin{equation}\label{eqn:dp-optimal-noise}
    \sum_{\phi_l\in \hat{\Phi}}P(\phi_l) \leq e^{\epsilon}{\sum_{\phi_l\in \hat{\Phi}}P(\phi_l+s)}+\delta, \end{equation}
then
the additive noise mechanism $M_{y}$ is $(\epsilon, \delta)$-ADP for any $y_k\in \mathcal{D}$.
\end{lemma}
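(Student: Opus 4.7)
The plan is to verify the $(\epsilon,\delta)$-ADP definition in \eqref{eqn:adp-formula} by rewriting the output probabilities as discrete sums indexed by subsets of the noise support $\Acute{\Phi}$, so that the hypothesis \eqref{eqn:dp-optimal-noise} can be applied on exactly those subsets.

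First, I would invoke the spherical rotation-symmetry reduction cited immediately before Definition~\ref{def:additive-noise-mechanism}: because the coordinates of $\phi_k$ are IID under the symmetric distribution $P$ and the sensitivity is measured in the $L_2$ norm, the worst-case privacy analysis collapses to a single scalar coordinate along which the two adjacent measurements differ by $\Delta = y - \Acute{y}$ with $|\Delta| \leq s$ (Definition~\ref{def:sensitivity}). Using the symmetry $P(\phi_l) = P(-\phi_l)$ from Definition~\ref{def:optimal-noise}, I may assume $\Delta \in [0,s]$ without loss of generality.

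Next, for any measurable $\mathcal{H} \subseteq \mathcal{O}$, define the subset $\hat{\Phi} = \{\phi_l \in \Acute{\Phi} : y + \phi_l \in \mathcal{H}\}$. From the additive structure of $M_y$ in \eqref{eqn:additive-noise-mechanism} and a translation of the summation index, one gets $\Pr[M_y(y) \in \mathcal{H}] = \sum_{\phi_l \in \hat{\Phi}} P(\phi_l)$ and $\Pr[M_y(\Acute{y}) \in \mathcal{H}] = \sum_{\phi_l \in \hat{\Phi}} P(\phi_l + \Delta)$. Substituting these into \eqref{eqn:adp-formula} reduces the lemma to the purely discrete inequality $\sum_{\phi_l \in \hat{\Phi}} P(\phi_l) \leq e^{\epsilon} \sum_{\phi_l \in \hat{\Phi}} P(\phi_l + \Delta) + \delta$ for every $\hat{\Phi} \subseteq \Acute{\Phi}$ and every $\Delta \in [0,s]$.

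Finally, I would close the argument by invoking the hypothesis \eqref{eqn:dp-optimal-noise}, which is exactly the $\Delta = s$ instance of this inequality, and extending it to $\Delta \in [0,s)$ via the monotonicity-from-zero-center property of $P$ in Definition~\ref{def:optimal-noise}. The hard part is precisely this extension: the naive pointwise comparison $P(\phi_l + \Delta) \geq P(\phi_l + s)$ fails on the region $\phi_l < -s$ because of the V-shape of $P$, so one cannot dominate the desired inequality term by term. The cleanest remedy is to argue at the functional level that the privacy-loss gap $\delta(\Delta) := \sup_{\hat{\Phi}} \bigl( \sum_{\phi_l \in \hat{\Phi}} P(\phi_l) - e^{\epsilon} \sum_{\phi_l \in \hat{\Phi}} P(\phi_l + \Delta) \bigr)$ is non-decreasing in $\Delta \in [0,s]$; this reduces to identifying the maximizing $\hat{\Phi}$ as an upper tail of $\Acute{\Phi}$ via a Neyman--Pearson-style ordering of the likelihood ratios $P(\phi_l)/P(\phi_l+\Delta)$, which is monotone in $\phi_l$ thanks to the monotonicity and symmetry of $P$. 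Once $\delta(s) \leq \delta$ is supplied by \eqref{eqn:dp-optimal-noise}, this monotonicity yields $\delta(\Delta) \leq \delta$ for all $\Delta \in [0,s]$, establishing \eqref{eqn:adp-formula} and the $(\epsilon,\delta)$-ADP claim.
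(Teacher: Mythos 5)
First, note that the paper does not prove this lemma at all: it is imported verbatim as Theorem~15 of \cite{DBLP:journals/corr/abs-2107-12957}, so there is no in-paper argument to compare yours against. Judged on its own, your skeleton is the right one: pulling the event $\mathcal{H}$ back through the additive map to a subset $\hat{\Phi}$ of the noise support, rewriting both output probabilities as sums over $\hat{\Phi}$ (with a shift by $\Delta = y - \Acute{y}$ in the second), and observing that the hypothesis \eqref{eqn:dp-optimal-noise} only covers the boundary case $\Delta = s$, so that the real content is extending the hockey-stick bound to all $\Delta \in [0,s]$. You also correctly diagnose why the naive pointwise domination $P(\phi_l+\Delta)\geq P(\phi_l+s)$ fails for $\phi_l<-s$.

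However, your proposed resolution of that crux does not go through for the class of distributions in Definition~\ref{def:optimal-noise}. Your Neyman--Pearson step requires the likelihood ratio $P(\phi_l)/P(\phi_l+\Delta)$ to be monotone in $\phi_l$, so that the maximizing $\hat{\Phi}$ is an upper tail. Symmetry plus monotone decrease from the center does not give this: $P$ may have flat plateaus separated by jumps (the sigmoid-stack construction in \eqref{eqn:noise-model-1} with large $C$ produces exactly such step functions), and for a two-plateau $P$ the ratio rises above $1$ near a downward jump and falls back to $1$ on the next plateau, so the optimal event $\{\phi_l: P(\phi_l) > e^{\epsilon}P(\phi_l+\Delta)\}$ is a union of intervals, not a tail. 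The monotonicity of $\delta(\Delta)=\sum_{\phi_l}[P(\phi_l)-e^{\epsilon}P(\phi_l+\Delta)]_+$ in $\Delta$ may still be true, but it needs a different argument (this is precisely what the cited Theorem~15 supplies), so your proof is incomplete at the step you yourself flagged as hard. Two further loose ends: the shifted argument $\phi_l+\Delta$ generally falls off the discretization grid, so the sums must really be read as integrals against a piecewise-constant density rather than sums over $\Acute{\Phi}$; and the reduction to one dimension cannot be obtained from spherical symmetry, since an IID product of non-Gaussian coordinates is not spherically symmetric and adjacent signals may differ in one sensor's measurements across many time steps, so the worst-case allocation of the $L_2$ budget $s$ across those coordinates also has to be justified.
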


It deserves noting that, given that the sensitivity $s$ (Definition \ref{def:sensitivity}) is satisfied $\forall\;k\in \mathbb{Z^+}$, the mechanism $M_y$ is $(\epsilon, \delta)$-ADP at any time step $k$ \cite[Lemma 2]{6606817}. 

    Next, we present the loss function, 
    denoted by $L^{\Omega_t}_{\gamma}$. 
    This function balances between the privacy parameter $\delta$ and a utility loss $U$ at a fixed $\epsilon$ and is given by \begin{subequations} \label{eqn:optimization-function}
\begin{equation} \label{eqn:optimization-function-1}    L^{\Omega_t}_{\gamma} = \delta+\Omega_t U,
\end{equation}
where the utility loss $U$ is given by
\begin{equation}
  U=\Big(|\phi_l|^\gamma\sum_{\phi_l \in \Acute{\Phi}}P(\phi_l)\Big)^{1/\gamma}   
\end{equation}
with $\gamma\in\{1,2\}$ such that $\gamma$ selects between $L_1$ or $L_2$ norm-based utility loss, and
\begin{equation}
    \Omega_t= \max\bigg(\frac{\Omega_{start}}{2^{t/\Gamma}},\Omega_{min}\bigg)
\end{equation}
\end{subequations}
is the utility weight at training epoch $t$ where $t\in \mathbb{Z^+}$ with an exponentially decaying rate $\Gamma$ from a starting value $\Omega_{start}$ and with a lower bound $\Omega_{min}$. The optimal noise distribution $P(\phi_l)$ is then optimized by minimizing the weighted sum of the utility loss $U$ and the privacy parameter $\delta$. This noise distribution is generated through the following steps, and we are going to omit $(\phi_l)$ to ease the notation. 
We start by generating the first monotonically increasing half (i.e., $\{\phi_l\}_{l\in \{1,\dots,\; N\}}$) of the noise distribution $P(\phi_l)$, which is given by
\begin{subequations} \label{eqn:noise-model}
\begin{equation}\label{eqn:noise-model-2}
    P_l=1/2 \;\textrm{SoftMax}(r_l);\;r_l \in \{r_0,\dots,\;r_N\},
\end{equation}
where SoftMax($r_l$) = $e^{r_l}/\sum^{N}_{i=0}e^{r_i}$ and $N$ is the number of discretization steps in the half-width of the noise distribution $P(\phi_l)$. The SoftMax function normalizes the $r_l$ values into a distribution, and the $r_l$ values are generated using a model of $v$-stacked Sigmoid functions (i.e., $\sigma(\phi_l)=(1+e^{-\phi_l})^{-1}$), which is given by
\begin{equation}\label{eqn:noise-model-1}
    r_l= \ln \bigg[A^2+\sum_{j=0}^v B_{j}^2\cdot \sigma(C\cdot(\phi_l-F_j)) \bigg],
\end{equation}
where $\phi_l=l\frac{d}{N}-d$. The parameters $A,\; B_j,\;C$, and $F_j$ are randomly initialized, then learned to optimize the loss function in \eqref{eqn:optimization-function} using numerical optimization methods (e.g., stochastic gradient descent (SGD)).

Next, the first half (i.e., $\{\phi_l\}_{l\in \{1,\dots,\; N\}}$) of the noise distribution generated by \eqref{eqn:noise-model-2} and denoted by $P_l$ is mirrored according to the following:
\begin{equation}
    P_j=P_{2N-j+1}\;\; \mathrm{for}\; j\in\{N+1,\cdots,\;2N\}. 
\end{equation}
Then, $P_j$ (i.e., $\{\phi_l\}_{l\in \{N+1,\dots,\;2N\}}$) is concatenated to $P_l$ (i.e., $\{\phi_l\}_{l\in \{1,\dots,\;N\}}$) to obtain the symmetric noise distribution $P(\phi_l)$ (i.e., $\{\phi_l\}_{l\in \{1,\dots,\;2N\}}$). 
\end{subequations}

\subsection{Differentially Private Zonotope-based Set-Membership Estimation}\label{subsection:Differentially-Private-Zonotope-based-Set-Membership-Estimation}
Now, we have a vector of $(\epsilon,\delta)$-ADP measurement signals $\hat{y}^{DP}_k$ determined by the sensor manager using a vector of the sensory entities' measurement signals $y_k$, the additive noise mechanism $M_y$ (Definition \ref{def:additive-noise-mechanism}), and the truncated optimal noise distribution $P(\phi_l)$ satisfying \eqref{eqn:optimal-noise-function}, \eqref{eqn:optimization-function}, \eqref{eqn:noise-model}, and Lemma \ref{lemma:dp-optimal-noise-mechanism}. Next, we will deploy these $(\epsilon,\delta)$-ADP measurement signals to the set-based estimator (Lemma \ref{lemma-set-based-estimation-zonotope}), i.e., the cloud estimator, to preserve the privacy of sensor measurements during the estimation process and get an $(\epsilon,\delta)$-ADP differentially private version of that set-based estimator in the following theorem.

\begin{theorem}\label{theorem:dp-estimator}
Given an $(\epsilon,\delta)$-ADP additive noise mechanism $M_y$ (Definition \ref{def:additive-noise-mechanism}) with sensitivity $s$ (Definition \ref{def:sensitivity}) and the truncated optimal noise distribution $P(\phi_l)$ satisfying \eqref{eqn:optimal-noise-function}, \eqref{eqn:optimization-function}, \eqref{eqn:noise-model}, and Lemma \ref{lemma:dp-optimal-noise-mechanism}, then using a vector of $(\epsilon,\delta)$-ADP measurement signals $\hat{y}^{DP}_k$, an approximate differentially private $(\epsilon,\delta)$-ADP version of the set-based estimator (Lemma \ref{lemma-set-based-estimation-zonotope}) which obtains state estimates that are insensitive to a single sensor’s measurement signal can be defined as  
\begin{align}
	\Bar{c}^{DP}_k = & \hat{c}^{DP}_{k-1} + \sum_{i=1}^{m+1} \lambda^{(i)}_{k} \big ( \hat{y}^{(i)DP}_{k} - H^{(i)} \hat{c}^{DP}_{k-1} - c^{(i)}_{v}-c_{p} \big),\label{eqn:dp-estimation-with-zonotope-center} \\
   \nonumber \Bar{G}^{DP}_k{=} &\Big[(I- \sum_{i=1}^{m+1} \lambda^{(i)}_{k} H^{(i)})\hat{G}^{DP}_{k-1},\\& \lambda^{(1)}_{k}G^{(1)}_{v},\dots, \lambda^{(m)}_{k}G^{(m)}_{v},\lambda^{(m+1)}_{k}G_{p}  \Big],\label{eqn:dp-estimation-with-zonotope-generator}
\end{align}
where $\mathcal{\Bar{S}}^{DP}_k=\mathcal{\Bar{Z}}^{DP}_k=\zono{\Bar{c}^{DP}_k, \Bar{G}^{DP}_k}$ is the corrected state set, $\mathcal{\hat{S}}^{DP}_{k-1}=\mathcal{\hat{Z}}^{DP}_{k-1}=\zono{\hat{c}^{DP}_{k-1}, \hat{G}^{DP}_{k-1}}$ is the predicted state set, ${M}_y:\hat{y}^{DP}_k = y_k + \phi_k$, $c_{p}$ is the noise value at the center of the truncated optimal noise distribution, and $G_{p}$ is the generator matrix, which is created using the range of the optimal noise $d$.  
\end{theorem}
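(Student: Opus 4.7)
The plan is to combine the post-processing immunity of differential privacy with a zonotopic absorption of the DP noise into the measurement noise structure. Conceptually, every component of the set-based estimator other than $\hat{y}^{DP}_k$ is either publicly known (the matrices $F$, $H^{(i)}$, the zonotopes $\mathcal{Z}_w$, $\mathcal{Z}^{(i)}_v$, $\mathcal{Z}_0$) or a deterministic function of $\hat{y}^{DP}_k$. Therefore, once the center/generator update equations are derived correctly, the privacy guarantee will drop out of Lemma~\ref{lemma:dp-optimal-noise-mechanism} together with post-processing.

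\textbf{Step 1: Represent the DP noise as a zonotope.} Since the truncated optimal noise distribution $P(\phi_l)$ from Definition~\ref{def:optimal-noise} has bounded support $\Phi = [-d,d]$ and is symmetric about $0$, I would define a noise zonotope $\mathcal{Z}_p = \zono{c_p, G_p}$ that contains the realized noise $\phi_k$ for all $k$. Because $\phi_k$ is added directly to the measurements, the effective measurement equation is $\hat{y}^{DP}_k = H x_k + v_k + \phi_k$, i.e., the DP noise acts as an extra additive measurement noise with bounding set $\mathcal{Z}_p$.

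\textbf{Step 2: Rederive the corrected zonotope via Lemma~\ref{lemma-set-based-estimation-zonotope}.} I would then view the DP noise as an additional ``virtual'' sensor noise source, indexed by $m+1$, whose corresponding generator is $G_p$ and center is $c_p$. Plugging $\hat{y}^{DP}_k$ into the correction step of Lemma~\ref{lemma-set-based-estimation-zonotope}, the residual $\hat{y}^{(i)DP}_{k} - H^{(i)}\hat{c}^{DP}_{k-1} - c^{(i)}_{v} - c_{p}$ appears naturally once the additional center $c_p$ is subtracted, and the generator matrix gains the extra block $\lambda^{(m+1)}_k G_p$ in exactly the place predicted by \eqref{eqn:dp-estimation-with-zonotope-generator}. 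The optimal weights $\lambda^{(1)}_k,\dots,\lambda^{(m+1)}_k$ can still be chosen via the Frobenius minimization \eqref{eqn:optimal-lambda-weights}, and the prediction step \eqref{eqn:time-update}--\eqref{eqn:girard-reduction} goes through unchanged with $\mathcal{\bar S}^{DP}_k$ replacing $\mathcal{\bar S}_k$. This step is essentially bookkeeping but is where the shape of the update equations gets fixed.

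\textbf{Step 3: Privacy via post-processing.} By Lemma~\ref{lemma:dp-optimal-noise-mechanism}, the measurement release $\hat{y}^{DP}_k = y_k + \phi_k$ is $(\epsilon,\delta)$-ADP with respect to the adjacency relation on $y$. All subsequent computations performed by the cloud estimator, namely \eqref{eqn:dp-estimation-with-zonotope-center}--\eqref{eqn:dp-estimation-with-zonotope-generator}, the Frobenius-optimal choice of $\lambda^{(i)}_k$, the prediction update, and the generator order reduction, depend on the private measurements only through $\hat{y}^{DP}_k$ and on publicly known quantities ($F$, $H^{(i)}$, $\mathcal{Z}_0$, $\mathcal{Z}_w$, $\mathcal{Z}^{(i)}_v$, $c_p$, $G_p$). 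Since differential privacy is closed under post-processing, the resulting $(\bar c^{DP}_k, \bar G^{DP}_k)$ inherits $(\epsilon,\delta)$-ADP. The final remark on time-uniformity follows from the note after Lemma~\ref{lemma:dp-optimal-noise-mechanism} that the sensitivity bound holds for every $k \in \mathbb{Z}^+$.

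\textbf{Expected obstacle.} The main subtlety is justifying that the DP noise can be cleanly modeled as a single extra generator block $\lambda^{(m+1)}_k G_p$ rather than being fused into each per-sensor noise term $\mathcal{Z}^{(i)}_v \oplus \mathcal{Z}_p$; this requires being explicit that $\phi_k$ is the same vector-valued perturbation across coordinates (as in Definition~\ref{def:additive-noise-mechanism}) and that the sensor manager adds it once. The privacy argument itself is short once the post-processing view is adopted; the bulk of the work is verifying that the modified correction formulas \eqref{eqn:dp-estimation-with-zonotope-center}--\eqref{eqn:dp-estimation-with-zonotope-generator} indeed over-approximate the intersection of $\mathcal{\hat Z}^{DP}_{k-1}$ with the measurement strips corresponding to the noisy observations.
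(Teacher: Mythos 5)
Your proposal is correct and follows essentially the same route as the paper: the privacy claim rests entirely on Lemma~\ref{lemma:dp-optimal-noise-mechanism} combined with the post-processing immunity of differential privacy, which is exactly the paper's (one-line) argument. Your Steps 1--2, which derive the $(m{+}1)$-th ``virtual sensor'' term $\zono{c_p,G_p}$ by absorbing the bounded DP noise into the measurement model before applying Lemma~\ref{lemma-set-based-estimation-zonotope}, are in fact more explicit than the paper's proof, which leaves that bookkeeping implicit in the theorem statement.
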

\begin{proof}
We have $\hat{y}^{DP}_k$ is $(\epsilon,\delta)$-approximate differentially private according to Lemma \ref{lemma:dp-optimal-noise-mechanism}, then based on the post-processing property \cite[Proposition~2.1]{TCS-042}, the privacy guarantees of approximate differential privacy still hold after any arbitrary processing. Hence, the set of states estimated using $\hat{y}^{DP}_k$ and represented by $\mathcal{\Bar{Z}}^{DP}_k$ is insensitive to a single sensor's measurement signal.
Additionally, we use bounded noise to preserve the privacy of each sensory entity’s measurement signal, so estimation errors are also bounded. 
\end{proof}
It is worth mentioning that the generators of all zonotopes are not corrupted by the $(\epsilon,\delta)$-ADP optimal noise, which in turn preserves the same state containment guarantees \cite{DBLP:journals/corr/abs-2010-11097}.
Likewise in \eqref{eqn:estimation-with-zonotope-generator}, $\lambda^{(i)}_{k}\in \mathbb{R}^{n\times 1}$ are weights such that $ i \in \{1,\dots,\; m+1\}$. Again, as in \cite{conf:disdiff}, the optimal weights $\lambda^{(i)}_{k}$ are calculated by 
\begin{equation}\label{eqn:dp-optimal-lambda-weights}
\Bar{\Lambda}_k^* = \argminB_{{\lambda}_k}\lVert \Bar{G}^{DP}_k \rVert_{F}^{2},
\end{equation}
where $\Bar{\Lambda}_k^* = [ {\lambda}^{(1)}_{k},\dots,\; {\lambda}^{(m+1)}_{k}]$. The optimal weights $\Bar{\Lambda}_k^*$ are calculated to reduce the Frobenius norm of the generator matrix $\Bar{G}^{DP}_k$; therefore, they reduce uncertainty around estimated values. The prediction update according to the system model defined in \eqref{eqn:system-model}, is given by
\begin{subequations}
\begin{equation}\label{eqn:dp-time-update}
    \mathcal{\Acute{S}}^{DP}_{k}=F \mathcal{\Bar{S}}^{DP}_{k}\oplus \mathcal{Z}_w,
\end{equation}
where $\mathcal{\Acute{S}}^{DP}_k$ is represented by the zonotope $\mathcal{\Acute{Z}}^{DP}_k=\zono{\Acute{c}^{DP}_k, \Acute{G}^{DP}_k}$.

Next, the order of generator matrix $\Acute{G}^{DP}_k$ is reduced according to \cite{10.1007/978-3-540-31954-2_19}, 
then we get $\mathcal{\Hat{S}}^{DP}_k=\mathcal{\Hat{Z}}^{DP}_k=\zono{\Hat{c}^{DP}_k,\Hat{G}^{DP}_k}$.
\end{subequations}

Finally, we summarize the ($\epsilon, \delta$)-ADP set-based estimator in Algorithm \ref{alg:adp-set-estimation-using-zonotope}. 
\begin{algorithm}
\caption{($\epsilon, \delta$)-ADP Set-Estimation Using Zonotope}\label{alg:adp-set-estimation-using-zonotope}
\begin{algorithmic}[1]
{\Statex {\bfseries Input:} DP parameter $\epsilon$, noise range $d$, sensitivity $s$.
} 
{ \Statex {\bfseries Output:} $\mathcal{\hat{S}}^{DP}_{k}=\mathcal{\Hat{Z}}^{DP}_k=\zono{\hat{c}^{DP}_k, \hat{G}^{DP}_k}$.}
\Statex{\bfseries Initialization:} The sensor manager uses $\epsilon$, $d$, and $s$ to generate the truncated optimal noise distribution $P(\phi_l)$ according to \eqref{eqn:optimal-noise-function} while optimizing the loss function in \eqref{eqn:optimization-function} by learning the parameters of the noise model in \eqref{eqn:noise-model}. 
Set $k = 1$ and $\mathcal{\hat{S}}_{0} = \mathcal{Z}_0$.
\While{$True$}
\State An array of $m$ sensors where each sensor provides a private measurement signal $y_k^{(i)}$, $ i\in \{1,\dots,\; m\}$.

\State A sensor manager aggregates the $m$ measurement signals into a vector of measurement signals; $y_k=\big[{y_k^{(1)}},\dots,{y_k^{(m)}}\big]^T$. 

{\State The sensor manager uses the additive noise mechanism $M_y$ in Definition \ref{def:additive-noise-mechanism} to obtain 
$\hat{y}^{DP}_k$; ${M}_y$: $\hat{y}^{DP}_k = y_k + \phi_k$. 
}
\State The cloud estimator uses $\hat{y}^{DP}_k$ to perform the estimation process according to the following steps:

\State Compute the optimal weights as in \cite{conf:disdiff}: $\Bar{\Lambda}_k^* = \argminB_{{\lambda}_k}\lVert \Bar{G}^{DP}_k \rVert_{F}^{2}$.
\State Compute the corrected state set: $\mathcal{\Bar{S}}^{DP}_{k}=\mathcal{\Bar{Z}}^{DP}_k= \zono{\Bar{c}^{DP}_k, \Bar{G}^{DP}_k}$,
\State $\Bar{c}^{DP}_k =  \hat{c}^{DP}_{k-1} + \sum_{i=1}^{m+1} \lambda^{(i)}_{k} \big ( \hat{y}^{(i)DP}_{k} - H^{(i)} \hat{c}^{DP}_{k-1} - c^{(i)}_{v}-c_{p} \big)$, 
\State $\Bar{G}^{DP}_{k} =\bigg[(I- \sum_{i=1}^{m+1} \lambda^{(i)}_{k} H^{(i)})\hat{G}^{DP}_{k-1},$\par
$\quad \quad \quad \; \; \lambda^{(1)}_{k} G^{(1)}_{v},\cdots, \lambda^{(m)}_{k} G^{(m)}_{v}, \lambda^{(m+1)}_{k}{G}_{p}\bigg]$.
 
\State Perform prediction update: $ \mathcal{\Acute{S}}^{DP}_{k}=F \mathcal{\Bar{S}}^{DP}_{k}\oplus \mathcal{Z}_w$. 
\State Reduce the order of $\Acute{G}^{DP}_k$ as in \cite{10.1007/978-3-540-31954-2_19}: $\hat{G}^{DP}_{k} =\downarrow_q	\Acute{G}^{DP}_k$. 
\State $\Hat{c}^{DP}_k = \Acute{c}^{DP}_k$.
\State Compute the predicted state set: $\mathcal{\hat{S}}^{DP}_{k}=\mathcal{\hat{Z}}^{DP}_k= \zono{\hat{c}^{DP}_k, \hat{G}^{DP}_k}$. 
\State Update the time: $k=k+1$.
\EndWhile
\end{algorithmic}
\end{algorithm}

\section{Evaluation} \label{sec:eval}

In this section, we present the experimental results for the Matlab 2018 implementation of our proposed $(\epsilon,\delta)$-ADP set-based estimator. For zonotope operations, we used the CORA toolbox.

Consider a network of $8$ sensory nodes tracking the location of an object in a circular motion within two-dimensional space with dimensions of 180m $\times$ 180m \cite{conf:disdiff}. We aim to keep the sensory nodes' measurements private from the cloud estimator while estimating the location of the rotating object. 
The state of each node consists of two variables that represent the position of the rotating object. The state and measurement matrices are 
\begin{equation*}
 F=\begin{bmatrix}
0.9920 & -0.1247 \\
0.1247 & 0.9920 
\end{bmatrix},\;
H^{(i)}=\begin{bmatrix}
1 &0 \\ 
\end{bmatrix}\; \mathrm{or}\;
\begin{bmatrix}
0 &1 
\end{bmatrix},
\end{equation*}
where the measurement matrix $H^{(i)}$ is selected based on the measurement sequence. The measurement and process noise zonotopes $\mathcal{Z}^{(i)}_{v}$ and $\mathcal{Z}_{w}$ are set to 
\begin{equation*}
    z^{(i)}_{v}=\Big\langle{\begin{bmatrix} 0 \end{bmatrix},\begin{bmatrix} 0.01\;\; 0.02 \end{bmatrix}} \Big\rangle
,\;
    z_{w}=\Big\langle{\begin{bmatrix} 0 \\0 \end{bmatrix},\begin{bmatrix} 0.50 & 0 \\0 & 0.50 \end{bmatrix}} \Big\rangle.
\end{equation*}
The $(\epsilon,\delta)$-ADP truncated optimal noise distribution (Definition \ref{def:optimal-noise}) is generated according to the noise model \eqref{eqn:noise-model} with $C=500$ in order to allow sudden jumps for the noise function $P(\phi_l)$. We set $d\in[-7,\;7]$, $\epsilon=0.3$, and $s=1$ and the parameters $A$, $B_j$, $F_j$, and $\delta$ are learned to optimize the loss function \eqref{eqn:optimization-function} using the SGD tool \cite{DBLP:journals/corr/abs-2107-12957}.

Figure \ref{fig:Object-tracking} represents a random snapshot for tracking the rotating object. We notice that the rotating object is enclosed by the estimated zonotope, which indicates that the state containment is still guaranteed, and the center of the estimated zonotope is very close to the rotating object, which is a good utility indicator. Supporting the above, Figure \ref{fig:estimated-location-error-with-epsilon} illustrates that the utility loss represented by the standard deviations of errors in the estimated locations of the rotating object increases as the privacy budget $\epsilon$ decreases.
It also shows that utility losses and average errors in estimated locations of the rotating object increase with a wider $(\epsilon,\delta)$-ADP optimal noise range. Hence, an industrial application should select the appropriate noise range based on the accepted error ranges.

For comparison, we consider the work in \cite{9147726}, where a differentially private interval estimator deploying truncated Laplace noise is presented. 
In particular, we compare the utility loss of $(\epsilon,\delta)$-ADP truncated optimal noise (Definition \ref{def:optimal-noise}) and truncated Laplace noise \cite{9147726} at $\epsilon=0.3$. The results are shown in Figure \ref{fig:Estimation-error-vs-noise-acc-vs-optimal}. The truncated Laplace noise range, for a given $\epsilon,\delta$, and sensitivity $s$, is determined by 
\begin{equation}
a=\frac{s}{\epsilon}\ln{\Big(1+e^{\epsilon}\frac{1-e^{-\epsilon}}{2\delta}\Big)}.    
\end{equation}
At a certain $\delta$ value, we find that the truncated Laplace noise causes a higher utility loss than the $(\epsilon,\delta)$-ADP truncated optimal noise. Similarly, Figure \ref{fig:deltavs-noise-range} indicates that the truncated Laplace noise needed to achieve a certain $\delta$ is wider than the $(\epsilon,\delta)$-ADP truncated optimal noise needed to achieve the same $\delta$. Indeed, for a certain privacy budget $\epsilon$, learning the truncated optimal noise distribution $P(\phi_l)$ (Definition \ref{def:optimal-noise}) using the SGD tool in \cite{DBLP:journals/corr/abs-2107-12957} allows our proposed $(\epsilon,\delta)$-ADP set-based estimator (Algorithm \ref{alg:adp-set-estimation-using-zonotope}) to minimize loss of privacy and utility simultaneously. Finally, our proposed $(\epsilon,\delta)$-ADP set-based estimator (Algorithm \ref{alg:adp-set-estimation-using-zonotope}) uses zonotopes for set representation, and the zonotopes are less conservative and allow efficient computation of the linear maps and Minkowski sum, which are essential operations at the set-based estimator.

\begin{figure}[h]
\graphicspath{ {./Figures/} }
    \centerline{
\includegraphics
[scale=0.30]{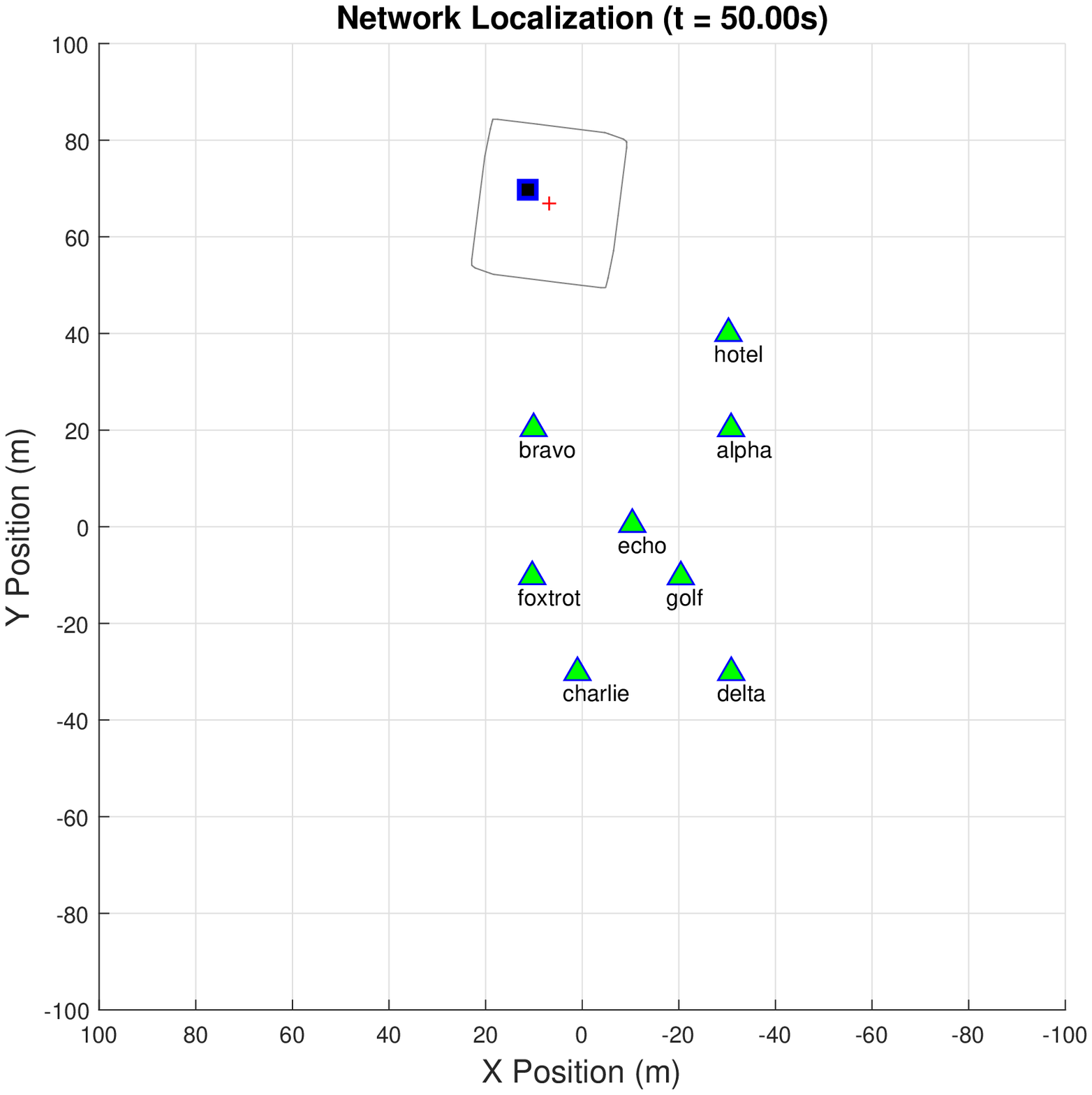}}
    \caption{Tracking an object in circular motion,  {\img{Figures/box}}: rotating object, {\img{Figures/triangle}}
    : sensory nodes, {\color{red}+} : center of the estimated zonotope.}
    \label{fig:Object-tracking}
\end{figure}

\begin{table*}[htbp]\centering 
\begin{small}
\begin{tabular}{@{}lrrrrrrrrrrrr@{}}\toprule
{$\epsilon$\;/\;Noise Range $d(m)$} & {3} & {5} & {7} & {9} & {11} & {13} & {15} \\ \midrule
{0.1} & 0.1502\ & 0.0811\ & 0.0518\ & 0.0360\ & 0.0262\ & 0.0197\ & 0.0151\ \\ 
{0.3} & 0.1198\ & 0.0503\ & 0.0244\ & 0.0126\ & 0.0067\ &0.0036\ & 0.0020\ \\ 
{0.5} & 0.0931\ & 0.0290\ & 0.0101\ & 0.0036\ & 0.0013\ & 0.0005\ & 0.0002\ \\ 
{0.7} & 0.0707\ & 0.0158\ & 0.0038\ & 0.0009\ & 0.0002\ & $5.64e^{-5}$\ & $1.39e^{-5}$\ \\ 
\bottomrule
\end{tabular}
\end{small}
\caption{Optimal $\delta$ values corresponding to different ranges $d(m)$ of $(\epsilon,\delta)$-ADP optimal noise at $\epsilon={0.1,0.3,0.5,0.7}$.}
\end{table*}

\section{Conclusion} \label{sec:conc}
We have proposed an $(\epsilon,\delta)$-ADP set-based estimator that performs set-based estimation in a privacy-preserving manner so that an adversary cannot learn the actual values of measurement signals from state estimates. It deploys an additive mechanism with truncated optimal noise that holds the measurement signals private while minimizing the utility loss. The evaluation results for our proposed estimator illustrate that a wider range of the truncated Laplace noise than of the $(\epsilon,\delta)$-ADP truncated optimal noise is needed to achieve a specific value of $\delta$. Hence, the truncated optimal noise has a lower utility loss than the truncated Laplace noise with the same $\epsilon$ and $\delta$ values. Also, using zonotopes for set representation in our proposed $(\epsilon,\delta)$-ADP set-based estimator gives it a computational advantage.

\begin{figure}[t]
\graphicspath{ {./Figures/} }
    \centerline{
\includegraphics
[scale=0.6]{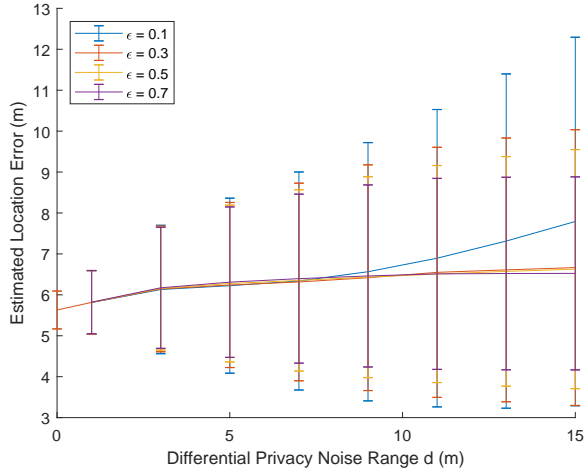}}
    \caption{Effect of $(\epsilon,\delta)$-ADP optimal noise range on the average error in the estimated location of the rotating object. }
    \label{fig:estimated-location-error-with-epsilon}
\end{figure}
\begin{figure}[t]
\graphicspath{ {./Figures/} }
\centerline{\includegraphics[scale=0.6]{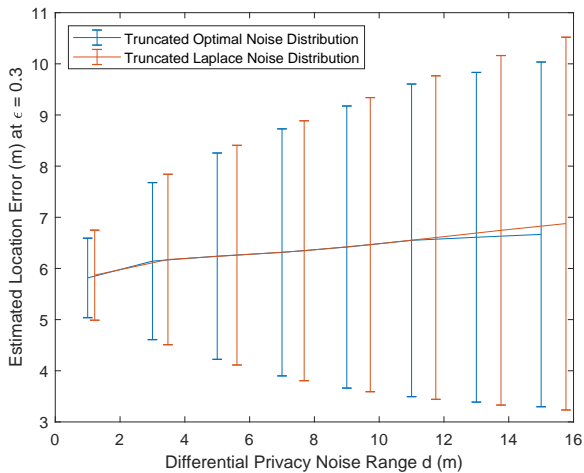}}
\caption{Comparing the effect of $(\epsilon,\delta)$-ADP truncated optimal noise and truncated Laplace noise on the error at the estimated location of the rotating object at $\epsilon=0.3$.}
\label{fig:Estimation-error-vs-noise-acc-vs-optimal}
\end{figure}
\begin{figure} [t]
\graphicspath{ {./Figures/} }
\centerline{\includegraphics[scale=0.6]{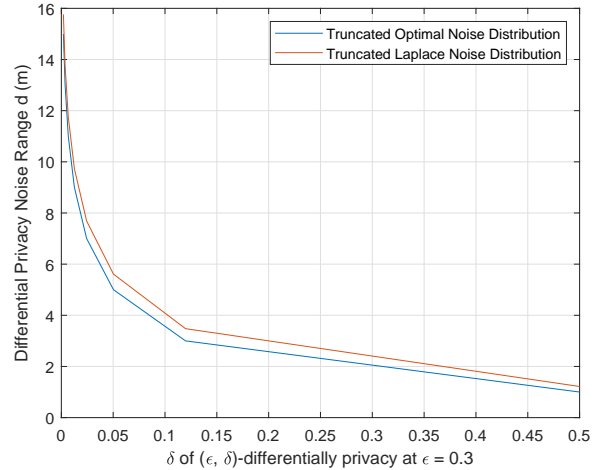}}
\caption{Comparing $\delta$ values with needed range of both $(\epsilon,\delta)$-ADP truncated optimal noise and truncated Laplace noise. }
\label{fig:deltavs-noise-range}
\end{figure}

\small
\bibliographystyle{ieeetr}
\bibliography{references}

\end{document}